\newtheorem{conjecture}{Conjecture}
\newtheorem{thm}{Theorem}
\newcommand{\N}{{\sf N}}
\newcommand{\D}{\sf{D}}
\newcommand{\I}{\mathscr{I}}
\newcommand{\uI}{\underline{\mathscr{I}}}
\newcommand{\uK}{\underline{\mathscr{K}}}
\newcommand{\ent}{{\sf S}}  %
\newcommand{\entR}{{\sf R}} 
\newcommand{\mi}{{\sf I}}  
\newcommand{\face}{\mathscr{F}}   
\definecolor{THcolor}{rgb}{0.1,0.7,0.1}
\definecolor{VHcolor}{rgb}{0.7,0.3,0.9}
\definecolor{MRocolor}{rgb}{0.1,0,1}
\begin{document}

\preprint{APS/123-QED}

\hfill CALT-TH 2023-027

\title{A gap between holographic and quantum mechanical extreme rays of the subadditivity cone}

\author{Temple He}
\affiliation{Walter Burke Institute for Theoretical Physics \\ California Institute of Technology, Pasadena, CA 91125 USA}
\email{templehe@caltech.edu}

\author{Veronika E. Hubeny}
\affiliation{Center for Quantum Mathematics and Physics (QMAP)\\ 
Department of Physics \& Astronomy, University of California, Davis, CA 95616 USA}
\email{veronika@physics.ucdavis.edu}

\author{Massimiliano Rota}
\affiliation{Center for Quantum Mathematics and Physics (QMAP)\\ 
Department of Physics \& Astronomy, University of California, Davis, CA 95616 USA}
\affiliation{Institute for Theoretical Physics, University of Amsterdam,
Science Park 904, Postbus 94485, 1090 GL Amsterdam, The Netherlands}
\email{maxrota@ucdavis.edu}

\date{\today}

\begin{abstract}
We show via explicit construction that for six or more parties, there exist extreme rays of the subadditivity cone that can be realized by quantum states, but not by holographic states. This is a counterexample to a conjecture first formulated in \cite{Hernandez-Cuenca:2022pst}, and implies the existence of deep holographic constraints that restrict the allowed patterns of independence among various subsystems beyond the universal quantum mechanical restrictions.   
\end{abstract}


\maketitle


\section{\label{sec:intro} Introduction}

A central question in the context of the gauge/gravity duality \cite{Maldacena:1997re,Gubser:1998bc,Witten:1998qj} is to understand how the bulk classical geometry is encoded in the entanglement structure of the boundary state, and one might hope to extract useful information about such encoding by investigating properties of the von Neumann entropy which are specific to this setting. The discovery of \textit{monogamy of mutual information} (MMI) \cite{Hayden:2011ag,Wall:2012uf} showed that for \textit{geometric states}, i.e., states of holographic CFTs which are dual to classical geometries, the HRRT prescription \cite{Ryu:2006bv,Hubeny:2007xt} implies that the entropies of spatial subsystems in the boundary CFT satisfy constraints that in general do not hold for arbitrary quantum systems. Since then, new holographic entropy inequalities have been found, and the \textit{holographic entropy cone} (HEC) \cite{Bao:2015bfa} has been studied extensively \cite{Marolf:2017shp,Rota:2017ubr,Cui:2018dyq,Hubeny:2018trv,Bao:2018wwd,Hubeny:2018ijt,Cuenca:2019uzx,Czech:2019lps,He:2019ttu,He:2020xuo,Avis:2021xnz,Czech:2022fzb}.

As the number of parties $\N$ increases, the search for new inequalities quickly becomes computationally unfeasible due to the fact that the combinatorics governing the number of inequalities typically grows doubly exponentially as a function of $\N$. Furthermore, fixing $\N$ is immaterial in QFT, since one can always imagine further partitioning the $\N$ regions into smaller subregions. For these reasons, \cite{Hernandez-Cuenca:2022pst} took a different approach to the characterization of the HEC. Rather than looking for the explicit expression of the inequalities at some given $\N$, \cite{Hernandez-Cuenca:2022pst} attempted to provide a more implicit description of the HEC for an arbitrary number of parties by relating it to the \textit{quantum entropy cone} (QEC) \cite{1193790}, and to distill the essential information that would allow for its reconstruction (at least in principle). Drawing from the ideas of \cite{Hubeny:2018trv,Hubeny:2018ijt}, \cite{Hernandez-Cuenca:2022pst} suggested that this essential information is the solution to the \textit{holographic marginal independence problem} (HMIP) \cite{Hernandez-Cuenca:2019jpv}.  

The HMIP is the restriction of the more general \textit{quantum marginal independence problem} (QMIP), introduced in \cite{Hernandez-Cuenca:2019jpv}, to geometric states. The QMIP asks the following question: Given an $\N$-party system and a complete specification of the presence of correlation (or conversely the lack thereof) among the various subsystems, is there a density matrix that satisfies these constraints? This problem can be conveniently formalized using the polyhedral cone in entropy space carved out by all instances of \textit{subadditivity} (SA) at given $\N$, called the \textit{subadditivity cone} (SAC). The SAC is an outer bound to the HEC, and a \textit{pattern of marginal independence} (PMI) is defined as the linear subspace spanned by one of its faces \footnote{This is the definition of a PMI given in \cite{Hernandez-Cuenca:2022pst}, whereas the original definition from \cite{Hernandez-Cuenca:2019jpv} was weaker.}. The HMIP then asks which faces of the SAC, and therefore which PMIs, can be reached by the HEC.

The analysis of \cite{Hernandez-Cuenca:2022pst} suggested that the HEC can be reconstructed if the solution to the \textit{extremal} version of the HMIP is known, i.e., if it is known which \textit{extreme rays} of the SAC (or equivalently 1-dimensional PMIs) can be realized by geometric states. Specifically, \cite{Hernandez-Cuenca:2022pst} provided strong evidence that the extreme rays of the HEC can simply be obtained from 1-dimensional PMIs involving more subsystems by particular projections that correspond to coarse-grainings \footnote{Notice that this implies that to reconstruct the extreme rays of the $\N$-party HEC, one needs to know the extreme rays of the SAC for some $\N'\geq\N$. We will return to this point in the next section.}. Holographic entropy inequalities can then be derived, at least in principle, using standard algorithms to convert the description of a polyhedral cone in terms of extreme rays into one given by facets \footnote{Since no efficient algorithm is known, such explicit derivation of the inequalities remains computationally unfeasible for large $\N$.}. 

If this reconstruction is indeed possible, the characterization of the HEC would reduce to the characterization of the set of extreme rays of the SAC that can be realized by geometric states. A natural guess suggested in \cite{Hernandez-Cuenca:2022pst}, which is consistent with the SAC for all $\N\leq 5$, is that these are all the extreme rays that can be realized in quantum mechanics. The aim of this letter is to show that this is \textit{not} the case, implying that even if the reconstruction argued in \cite{Hernandez-Cuenca:2022pst} is possible, extremal quantum marginal independence alone is not enough to fully characterize the HEC, since there exist deeper constraints which restrict the set of extremal PMIs that can be realized in holography. 

The structure of the paper is as follows. In \S\ref{sec:HECfromHMIP} we review the definition of the HEC from \cite{Bao:2015bfa}, the notion of holographic marginal independence from \cite{Hernandez-Cuenca:2019jpv}, and the reconstruction of the HEC from the solution to the extremal HMIP argued in \cite{Hernandez-Cuenca:2022pst}. In \S\ref{sec:lattice_techniques} we review the machinery from \cite{He:2022bmi}, which allows us to efficiently derive extreme rays of the SAC that satisfy \textit{strong subadditivity} (SSA) and therefore have a chance of being realizable by quantum states. We will use these techniques to find such an extreme ray that violates MMI, and therefore cannot be realized by any geometric state. Nevertheless, we will demonstrate in \S\ref{sec:example} that this extreme ray is in fact realized by a quantum state. Finally, in \S\ref{sec:discussion} we comment on the implications of this result for the characterization of the HEC for an arbitrary number of parties.

\section{The holographic entropy cone from marginal independence} \label{sec:HECfromHMIP}

\subsection{\label{subsec:def}Definition of the holographic entropy cone}

The HEC was introduced in \cite{Bao:2015bfa}, following the analogous program for arbitrary quantum states \cite{1193790}, as a convenient framework to analyze the set of inequalities implied by the Ryu–Takayanagi (RT) formula \cite{Ryu:2006bv}. For our purposes, it will be sufficient to view the HEC as the convex cone of entropy vectors which are realized by \textit{graph models of holographic entanglement}, which we now briefly review. For the original definition of the HEC, and for a detailed explanation of how a graph model is related to the RT surfaces that compute the entropies of a collection of boundary spatial subsystems, we refer the reader to \cite{Bao:2015bfa} (see also \cite{Marolf:2017shp,Hubeny:2018trv,Hubeny:2018ijt} for additional subtleties).

An $\N$-party graph model is a simple weighted graph $G=(V,E)$ with positive weights, a specification of a subset $\partial V\subseteq V$ of vertices called \textit{boundary vertices}, and a surjective (but not necessarily injective) map $\xi:\partial V\rightarrow [\N+1]=\{1,2,\ldots,\N+1\}$, where $1,\ldots,\N$ label the parties and $\N+1$ labels the purifier. Given a graph model, one associates to it an \textit{entropy vector} $\vec\ent\in\mathbb{R}^{\D}$, with $\D=2^\N-1$, as follows. For a non-empty subset $\I\subseteq [\N]$, a cut ``homologous to $\I$'' ($\I$-cut) is a subset $V_{\I}\subset V$ such that $\partial V\cap V_{\I}=\xi^{-1}(\I)$, where $\xi^{-1}$ denotes the pre-image of $\xi$. The \textit{cost} of any such cut is the sum of the weights of the edges that connect a vertex in $V_{\I}$ to one in $V_{\I}^c$, the complement of $V_{\I}$ in $V$. The entropy $\ent_{\I}$ is then defined as the cost of the $\I$-cut with \textit{minimal cost}.

It is straightforward to see that the set of entropy vectors obtained from all such graph models at any fixed $\N$ is a convex cone. This follows from the fact that given any two graph models $G_1,G_2$ with entropy vectors $\vec \ent_1,\vec \ent_2$, the conical combination $\vec\ent=\alpha\vec\ent_1+\beta\vec\ent_2$, with $\alpha,\beta>0$, is realized by the graph $G=\alpha G_1\oplus \beta G_2$, where $\alpha G_1$ is a graph model obtained from $G_1$ by rescaling the weights with the coefficient $\alpha$ (and similarly for $\beta G_2$), and $\oplus$ denotes the disjoint union. As shown in \cite{Bao:2015bfa}, this cone is identical to the HEC, and it is polyhedral for any $\N$, i.e., it is specified by a finite number of inequalities, or equivalently, by a finite number of extreme rays. It was further shown in \cite{hayden2016holographic} that, for any $\N$, the cone of graph models, or equivalently the HEC, is contained in the QEC \footnote{This is not a priori obvious because the original definition of the HEC in \cite{Bao:2015bfa} is a purely geometric one, and it does not assume that the bulk geometry corresponds to a CFT state \cite{Marolf:2017shp}.}.

\subsection{\label{subsec:ehimp}Formalization of the HMIP and its extreme version}

An obvious outer bound to the QEC, and therefore also the HEC, is the SAC, i.e., the polyhedral cone carved out by all instances of SA:
\begin{equation}
\label{eq:mi_instances}
    \ent_{\uI}+\ent_{\uK}-\ent_{\uI\uK}\geq 0\quad \forall\, \uI,\uK \subset [\N+1],\quad \uI\cap\uK=\varnothing ,
\end{equation}
where underlined indices indicate subsystems that can include the purifier $\N+1$ (as opposed to $\I \subseteq [\N]$). Thus, the Araki-Lieb inequality conveniently takes the form of an SA involving the purifier. 

Consider now a face $\face$ of the SAC and a vector $\vec\ent\in\text{int}(\face)$. Notice that the collection of SA instances which are saturated by $\vec\ent$ is independent from the specific choice of this vector. Furthermore, the saturation of SA is equivalent to the vanishing of the \textit{mutual information} (MI) $\mi(\uI:\uK)$, which in quantum mechanics is attained if and only if the density matrix factorizes, i.e., if the subsystems $\uI$ and $\uK$ are independent. We can then interpret the linear subspace spanned by $\face$ as corresponding to a specification of which subsystems are independent and which manifest some correlation, while remaining agnostic about the specific values of the entropies. This subspace is in fact a PMI as defined in \cite{Hernandez-Cuenca:2022pst}.

The QMIP and HMIP then ask which PMIs can be realized by quantum states and graph models, respectively. More specifically, they ask which PMIs correspond to faces of the SAC such that there exist in the interior at least one entropy vector realized by a quantum state or, respectively, a graph model. In the following sections we will be interested in the extremal version of these problems, which we denote by EQMIP and EHMIP, and only focus on the 1-dimensional PMIs. These correspond to the 1-dimensional faces of the SAC, i.e., its extreme rays.

\subsection{\label{subsec:reconstruction}Reconstructing the HEC from the solution to the EHMIP}

The intuition that the solution to the HMIP could provide sufficient information for the derivation of the HEC was first suggested in \cite{Hubeny:2018trv,Hubeny:2018ijt}, based on the observation that the PMI of a choice of boundary regions in a geometric state is captured by the connectivity of the RT surfaces, while the specific value of the entropy is immaterial in QFT because it depends on the choice of cut-offs. This intuition was then further developed in \cite{Hernandez-Cuenca:2022pst}, which formulated the following conjecture and checked that it holds for $\N\leq 5$ \footnote{Most of the machinery developed in \cite{Hernandez-Cuenca:2022pst} was used to show that \Cref{conj:1} would follow from other purely graph theoretic conjectures, but this machinery is not necessary for the purpose of this letter.}:

\begin{conjecture}
\label{conj:1}
    For any extreme ray $\vec\entR_{\N}$ of the $\N$-party \emph{HEC}, there exists for some $\N' \geq \N$ an extreme ray $\vec \entR'_{\N'}$ of the $\N'$-party \emph{SAC} such that $\vec \entR'_{\N'}$ can be realized by a graph model and 
    \begin{equation}
        \Lambda_{\N'\rightarrow \N}\vec \entR'_{\N'}=\entR_{\N} ,
    \end{equation}
    where $\Lambda_{\N'\rightarrow \N}$ is a map associated to a coarse-graining of the $\N'$ parties into $\N$ blocks \emph{\footnote{The precise expression of $\Lambda_{\N'\rightarrow \N}$ is not necessary for this discussion, but for clarity we give a simple example of such a projection. Consider the 3-party entropy vector $\vec\ent=(1,1,1;2,2,2;1)$ for $A,B,C$ (the entropies are ordered conventionally as exemplified in \eqref{eq:lex} for $\N=6$), and define the new parties $A'=A,\,B'=BC$. The 2-party entropy vector for these coarse-grained parties is then $\vec\ent'=(1,2;1)$, and it is obtained from $\vec\ent$ by simply dropping the components that separate $B$ from $C$, i.e., $B,C,AB,AC$.}}.
\end{conjecture}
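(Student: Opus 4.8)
The plan is to reduce the statement to a purely graph-theoretic claim about refinements of a realizing graph model, and then to attack that claim through the combinatorics of minimal cuts. First I would observe that the coarse-graining map $\Lambda_{\N'\rightarrow\N}$ is nothing but the coordinate projection retaining those components $\ent_{\uI}$ whose argument is a union of whole blocks, and that coarse-graining at the level of graph models is just postcomposition of the labeling map. Concretely, given any graph model $G=(V,E)$ with labeling $\xi':\partial V\to[\N'+1]$, collapsing the fine labels into $\N$ blocks (keeping the purifier fixed) produces a graph model on the \emph{same} weighted graph with labeling $\xi=b\circ\xi'$, and for every block-union $\I$ the $\I$-cut condition reduces to the $\tilde\I$-cut condition of $G'$, where $\tilde\I$ is the union of fine labels in the blocks of $\I$. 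Hence $\ent^{G}_{\I}=\ent^{G'}_{\tilde\I}$, so that $\Lambda_{\N'\rightarrow\N}\vec\entR'_{\N'}=\vec\entR_{\N}$ holds \emph{by construction} as soon as $G'$ is any refinement of a graph realizing $\vec\entR_{\N}$.

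With this reformulation the content of the conjecture is the existence of a refinement whose entropy vector is an extreme ray of the SAC. I would therefore fix a graph model $G$ realizing $\vec\entR_\N$ and pass to a refinement $G'$ obtained by splitting the parties, the natural first candidate being the \emph{finest} refinement, in which every boundary vertex (including those mapping to the purifier) is given its own party, so that $\N'+1=|\partial V|$. The holographic realizability of $\vec\entR'_{\N'}$ is then automatic, since $G'$ is the same weighted graph carrying a more refined labeling, and the projection identity above guarantees it coarse-grains back to $\vec\entR_\N$.

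The essential remaining step is to show that the entropy vector of $G'$ spans a $1$-dimensional face of the $\N'$-party SAC, i.e.\ that its PMI is extremal. My approach would be to translate saturation into minimal-cut data: an instance $\mi(\uI:\uK)=0$ of \eqref{eq:mi_instances} is saturated by a graph model precisely when a minimal $\uI\uK$-cut can be chosen so as to split into a minimal $\uI$-cut and a minimal $\uK$-cut sharing no cut edges. I would then try to argue that, for a sufficiently fine labeling, the collection of such saturated instances cuts the ambient entropy space $\mathbb{R}^{2^{\N'}-1}$ down to the single ray through $\vec\entR'_{\N'}$, using the extremality of $\vec\entR_\N$ together with the rigidity of its realizing graph (that an HEC extreme ray is realized by an essentially unique, irreducible graph) to forbid any transverse deformation preserving all the saturated MI instances.

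The hard part will be controlling this last step, and I do not expect the naive finest refinement to work unconditionally. Refining increases the dimension $2^{\N'}-1$ of the ambient entropy space at the same time as it saturates more instances of SA, and these two effects need not balance so as to leave a $1$-dimensional PMI; worse, over-refining can in principle destroy extremality, since an HEC extreme ray may become decomposable once its constituents are resolved by finer parties. The crux is thus to exhibit, for each $G$, a refinement that is simultaneously irreducible (extremal in the HEC) and maximally independent (its saturated MI instances spanning a complement of dimension $2^{\N'}-2$). I would expect this to require an auxiliary structural characterization of the realizing graphs of HEC extreme rays --- precisely the kind of graph-theoretic input on which the conjecture is known to rest --- rather than to follow from convexity alone, which is consistent with the statement being formulated as a conjecture and verified only up to $\N=5$.
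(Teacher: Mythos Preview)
The statement you are attempting to prove is \Cref{conj:1}, which the paper explicitly presents as an open conjecture, not as a theorem. The paper states outright that ``a proof of \Cref{conj:1} is still lacking, and it is far beyond the scope of this letter''; the only support offered is the verification for $\N\leq 5$ carried out in \cite{Hernandez-Cuenca:2022pst}, together with a remark that \cite{Hernandez-Cuenca:2022pst} reduces the conjecture to other (also unproven) graph-theoretic conjectures. There is therefore no proof in the paper to compare your proposal against.

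As for the proposal itself: your first two paragraphs are correct and essentially standard --- coarse-graining commutes with the graph-model entropy computation, so any refinement of a realizing graph automatically projects back to $\vec\entR_\N$. The substantive gap, which you correctly isolate, is the extremality of $\vec\entR'_{\N'}$ in the SAC. Your proposed mechanism (translate saturated MI into shared-cut-free minimal cuts, then use rigidity of the realizing graph to pin down a $1$-dimensional PMI) is a reasonable heuristic but is not a proof: you have not shown that the saturated instances span a hyperplane of codimension one, and your own closing paragraph concedes that the finest refinement need not work and that some unspecified ``auxiliary structural characterization'' of HEC-extreme realizing graphs would be required. That is precisely the missing ingredient that keeps \Cref{conj:1} a conjecture, so what you have written is a plausible outline of a strategy rather than a proof.
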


If proven to be true, \Cref{conj:1} would have important implications for the characterization of the HEC. It would imply that for any $\N$, there exists some finite $\N'\geq \N$ such that the $\N$-party HEC can be obtained as the conical hull of all possible coarse-grainings of the extreme ray of the $\N'$-party SAC realizable by graph models \cite{Hernandez-Cuenca:2022pst}. In other words, to reconstruct the $\N$-party HEC it would be sufficient to know the solution to the EHMIP for a certain $\N'\geq\N$ which depends on $\N$ (it was shown in \cite{Hernandez-Cuenca:2022pst} that for $\N=3,4$ it suffices to have $\N'=\N$, whereas for $\N=5$ one needs $\N'=8$).

While a proof of \Cref{conj:1} is still lacking, and it is far beyond the scope of this letter, considering the strong evidence given in \cite{Hernandez-Cuenca:2022pst}, it is natural to focus on the solution to the EHMIP, since this distills the essential information underlying the HEC. The immediate question then is whether there is some physical principle that identifies the SAC extreme rays realizable by graph models. A possibility suggested in \cite{Hernandez-Cuenca:2022pst}, which holds for $\N \leq 5$, is that there is in fact nothing special about graph models, and that the EHMIP and EQMIP have the same solution:

\begin{conjecture}
\label{conj:2}
    For any $\N$, all extreme rays of the $\N$-party \emph{SAC} that can be realized by quantum states can also be realized by graph models.
\end{conjecture}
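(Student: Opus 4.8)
The stated conjecture is precisely the claim this letter sets out to \emph{refute}, so my ``proof'' is really a strategy for producing an explicit counterexample at $\N=6$: an extreme ray of the SAC realized by a quantum state but by no graph model. The lever I would use is MMI. Every graph model satisfies MMI, so its entropy vector lies in the MMI cone; since MMI is a single linear inequality, every positive multiple of a ray that violates it violates it as well, and hence no graph model can sit on such a ray. It therefore suffices to exhibit one extreme ray $\vec\entR$ of the SAC with two properties: (i) $\vec\entR$ violates at least one instance of MMI, which by the preceding remark rules out graph realizability, and (ii) $\vec\entR$ is realized by some quantum state. Since \Cref{conj:2} is known to hold for $\N\le 5$, the search must start at $\N=6$.

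First I would narrow the search to rays that could conceivably be quantum. Any quantum entropy vector obeys SSA, so I would enumerate only the extreme rays of the SAC that lie in the SSA cone, using the lattice techniques of \cite{He:2022bmi} reviewed in \S\ref{sec:lattice_techniques} to keep this enumeration tractable at $\N=6$. For each such ray I would evaluate every instance of MMI and keep one, $\vec\entR$, that is violated. This pins down a concrete $1$-dimensional PMI, specified both by the collection of mutual informations $\mi(\uI:\uK)$ that vanish on it and by the direction of $\vec\entR$ itself.

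The decisive and hardest step is (ii): building an explicit quantum state whose entropy vector lies on $\vec\entR$. Membership in the SSA cone is only necessary, not sufficient, for quantum realizability---the QEC sits strictly inside the SSA cone for $\N\ge 4$---so lying on an SSA-satisfying extreme ray guarantees nothing, and one must actually produce a state. Concretely, a purification on $\N+1$ parties must reproduce the exact pattern of vanishing MIs that defines the PMI, and its $\D=2^{\N}-1$ subsystem entropies must be proportional to the components of $\vec\entR$. I would look first among states whose entropies are easy to compute and which can already violate MMI---stabilizer states, classical (diagonal) distributions such as collections of perfectly correlated bits, and tensor products of these---and tune their structure until the saturated instances of SA match the PMI exactly. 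Verifying a candidate then reduces to computing all $\D$ entropies and checking two things: that the vector is proportional to $\vec\entR$, and that no SA instance \emph{outside} the PMI is saturated, so that the realized face is genuinely the $1$-dimensional one rather than a higher-dimensional face containing it.
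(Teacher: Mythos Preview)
Your overall strategy is exactly the paper's: work at $\N=6$, isolate an SSA-compatible extreme ray of the SAC via the lattice techniques of \cite{He:2022bmi}, verify an MMI violation to exclude graph realizability, and then exhibit a quantum realization. Where you diverge from the paper is in the decisive step (ii). You propose to search directly among stabilizer states, classical distributions, and tensor products thereof, tuning until the $\D=63$ entropies line up. The paper instead passes through an intermediate combinatorial object: it builds a \emph{hypergraph model} \cite{Bao:2020zgx} whose min-cut entropy vector equals $\vec\entR$, guided by the observation that the single violated MMI instance points to a 4-party GHZ-like core (one weight-2 hyperedge on $A,BC,DE,FO$), and then dresses it with a handful of ordinary edges. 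Quantum realizability then comes for free from the theorem of \cite{Walter:2020zvt} that every hypergraph entropy vector is a stabilizer entropy vector. Your direct search could in principle succeed, but the hypergraph route turns an unstructured search over states into a small combinatorial puzzle with far fewer degrees of freedom, and it cleanly separates ``find the entropy vector'' from ``find the state.''

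One minor redundancy: once you have checked that the candidate entropy vector is proportional to $\vec\entR$, you already know it lies on that extreme ray, so the additional check that no SA instance outside the PMI is saturated is automatic and need not be verified separately.
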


In the rest of this work we will construct a counterexample to \Cref{conj:2}. Its implication for the characterization of the HEC will then be discussed in \S\ref{sec:discussion}.

\section{Deriving extreme rays of the SAC which satisfy SSA}
\label{sec:lattice_techniques}

As we mentioned, \Cref{conj:2} holds for $\N\leq 5$, so to look for a counterexample we need to consider at least $\N=6$. However, in this case the combinatorics of the faces of the SAC is sufficiently complicated that an explicit derivation of all extreme rays is not feasible even using state of the art algorithms \cite{Normaliz:301}. Instead, we should restrict ourselves to the relatively few extreme rays which satisfy SSA, since these are the only ones that can possibly be realized by quantum states. 

A first useful result in this direction was recently obtained in \cite{He:2022bmi}.

\begin{thm}
\label{thm:bp_theorem}
    For any $\N$, all extreme rays of the \emph{SAC} that can possibly be realized by quantum states (other than the ones realized by Bell pairs) belong to the face that spans the subspace given by
    \begin{equation}
    \mi(\ell:\ell')=0\quad \forall\ell,\ell'\in [\N+1] .
    \end{equation}
\end{thm}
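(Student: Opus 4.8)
The plan is to prove the contrapositive: I will take $\vec\ent$ to be an extreme ray of the SAC that satisfies SSA (a necessary condition for any quantum realization), assume that $\mi(a:b)>0$ for some pair of \emph{single} parties $a,b\in[\N+1]$, and show that $\vec\ent$ must then be a rescaled Bell pair. The mechanism is to peel a Bell pair off $\vec\ent$ and invoke extremality. Concretely, let $\vec b^{(ab)}$ denote the entropy vector of the Bell pair shared by $a$ and $b$, normalized so that $\ent_\uJ=1$ precisely when $|\uJ\cap\{a,b\}|=1$ (entropies being read off through the purity relation $\ent_\uJ=\ent_{\comp{\uJ}}$), and set $\epsilon=\tfrac12\mi(a:b)>0$. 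I would then write $\vec\ent=\epsilon\,\vec b^{(ab)}+\vec\ent'$ with $\vec\ent'=\vec\ent-\epsilon\,\vec b^{(ab)}$, and argue that both summands lie in the SAC.

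The first summand is trivially in the SAC, being a positive multiple of a genuine quantum entropy vector. All the content is in showing $\vec\ent'\in\text{SAC}$, i.e.\ that every SA instance remains non-negative after the subtraction. Here I would first record how the Bell pair contributes to an arbitrary instance: a direct case analysis gives $\mi_{\vec b^{(ab)}}(\uI:\uK)=2$ when $\uI$ and $\uK$ \emph{separate} $a$ from $b$ (one of the two parties in $\uI$, the other in $\uK$) and $\mi_{\vec b^{(ab)}}(\uI:\uK)=0$ in every other configuration. Hence the only instances at risk of turning negative are the separating ones, for which it remains to verify $\mi_{\vec\ent}(\uI:\uK)\ge 2\epsilon=\mi_{\vec\ent}(a:b)$.

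This inequality is the heart of the argument, and it is precisely where SSA enters. Writing $\uI=a\cup X$ and $\uK=b\cup Y$ as disjoint unions, monotonicity of mutual information yields the chain $\mi(\uI:\uK)\ge\mi(a:\uK)\ge\mi(a:b)$, the two gaps being the conditional mutual informations $\mi(X:\uK\,|\,a)\ge 0$ and $\mi(a:Y\,|\,b)\ge 0$, each of which is an instance of SSA. I would treat the purifier uniformly by regarding $\N+1$ as an ordinary party of the global pure state, so that the same SSA inequalities hold for any $a,b\in[\N+1]$. This is the step that makes the hypothesis ``realizable by quantum states'' do its work, and it is what guarantees the argument applies to \emph{every} extreme ray that could conceivably be quantum, not only to those for which an explicit state is exhibited.

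With $\vec\ent=\epsilon\,\vec b^{(ab)}+\vec\ent'$ now a decomposition into two SAC vectors, extremality of $\vec\ent$ (the SAC is pointed, since $\mi(\uJ:\comp{\uJ})=2\ent_\uJ\ge 0$ forces all entropies non-negative) forces each summand to be a non-negative multiple of $\vec\ent$; as $\epsilon\,\vec b^{(ab)}\neq 0$ this gives $\vec\ent\propto\vec b^{(ab)}$, i.e.\ $\vec\ent$ is a Bell pair, completing the contrapositive. I expect the main obstacle to be bookkeeping rather than conceptual: getting the Bell-pair case analysis correct across all SA instances (including the non-separating ones and those involving the purifier), and checking carefully that the monotonicity step invokes only SSA so that the conclusion holds for the full set of potentially-quantum extreme rays.
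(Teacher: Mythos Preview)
Your argument is correct. The paper itself does not supply a proof of this theorem; it simply cites Corollary~1 of \cite{He:2022bmi}, and the surrounding text (``in a similar fashion to how \cite{He:2022bmi} proved \Cref{thm:bp_theorem} using the extreme rays realized by Bell pairs'') indicates that the cited argument is precisely the Bell-pair peel-off you describe. Your write-up is thus a self-contained reconstruction of the referenced proof: subtract $\tfrac12\mi(a:b)\,\vec b^{(ab)}$, use SSA-monotonicity $\mi(aX:bY)\ge\mi(a:bY)\ge\mi(a:b)$ to keep the separating SA instances non-negative, and invoke extremality in the pointed SAC. There is nothing further to compare.
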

\begin{proof}
    See Corollary 1 in \cite{He:2022bmi}.
\end{proof}

While \Cref{thm:bp_theorem} gives a considerable speed-up in the computation of the extreme rays of the SAC that satisfy SSA \footnote{The computation of all the extreme rays of the SAC for $\N=5$ takes several days on a standard laptop, but the extreme rays of the face identified by \Cref{thm:bp_theorem} only takes a few minutes.}, it is still not sufficient to obtain all such extreme rays for $\N=6$. To obtain these rays, one can generalize \Cref{thm:bp_theorem} and derive new constraints using extreme rays that are already known, in a similar fashion to how \cite{He:2022bmi} proved \Cref{thm:bp_theorem} using the extreme rays realized by Bell pairs. This program is currently being explored systematically in \cite{He:2022wip}, but for the purpose of providing a counterexample to \Cref{conj:2}, it suffices to construct a single extreme ray with the requisite conditions.

Labeling the six parties by $A,B,C,D,E,F$, and ordering the components of an entropy vector lexicographically, as in
\begin{equation}
\label{eq:lex}
    (A,\ldots,F;AB,AC,\ldots,EF;ABC,\ldots;ABCDEF),
\end{equation}
the example we consider is
\begin{align}
\begin{split}
\label{eq:ray}
   \entR_6 = (&2, 1, 1, 1, 2, 2;\, 3, 3, 3, 4, 4, 2, 2, 3, 3, 2, 3, 3, 3, 3, 4; \\
    & 2, 4, 5, 5, 4, 5, 5, 3, 5, 4, 3, 4, 4, 4, 4, 5, 4, 4, 5, 3;\, 3, \\
    & 4, 4, 4, 4, 3, 4, 4, 3, 3, 3, 5, 4, 4, 4;\, 3, 3, 2, 2, 2, 3;\, 1).
\end{split}
\end{align}
The reader can easily verify that \eqref{eq:ray} is indeed an extreme ray of the 6-party SAC by first checking that it satisfies all instances of SA, and that the set of vanishing MI instances has rank ${\D}-1=62$, where ${\D}=2^6-1$ is the dimension of the $\N=6$ entropy space. Furthermore, as the reader can check, \eqref{eq:ray} violates one instance of MMI, in particular
\begin{equation}
\label{eq:mmi_violation}
    -\mi_3\,(A:BC:DE) = -2 \not\geq 0,
\end{equation}
where
\begin{align}
    & \mi_3\,(X:Y:Z):= \nonumber\\
    &\qquad \ent_X+\ent_{Y}+\ent_{Z}-\ent_{XY}-\ent_{XZ}-\ent_{YZ}+\ent_{XYZ}.
\end{align}
This implies that \eqref{eq:ray} cannot be realized by a graph model.

To complete the proof that \Cref{conj:2} is false, we now need to show that even though \eqref{eq:ray} cannot be realized by a graph model, it is nevertheless the entropy vector of a quantum state. This is the goal of the next section.

\section{\label{sec:example} Quantum state realization}

To show that \eqref{eq:ray} is the entropy vector of a quantum state we will use the \textit{hypergraph models} introduced in \cite{Bao:2020zgx}. Hypergraph models are defined analogously to the graph models presented above, with the only difference being that in addition to edges one also allows for hyperedges connecting three or more vertices. Therefore we only need to clarify under what circumstances the weight of a hyperedge contributes to the cost of an $\I$-cut. As for standard edges, given an $\I$-cut $V_{\I}$, and a hyperedge $h$ (thought of as a collection of vertices), the weight of $h$ contributes to the cost of the cut if and only if $h$ contains at least one vertex in both $V_{\I}$ and $V_{\I}^c$. As usual, the entropy of $\I$ is then given by the cost of the $\I$-cut with minimal cost.

We can now try to construct a hypergraph model whose entropy vector is the extreme ray \eqref{eq:ray}. There is currently no systematic procedure to construct a hypergraph (or even graph) realization of a given entropy vector, but a convenient starting point is the observation from the previous section that \eqref{eq:ray} violates only a single instance of MMI (cf.~\eqref{eq:mmi_violation}). The prototypical example of a quantum state that violates MMI is the GHZ state, which is realized by a hypergraph with just a single hyperedge. To realize \eqref{eq:ray} we then start from a hypergraph with a single weight 2 hyperedge (2 is the value of the instance of $\mi_3$ in \eqref{eq:mmi_violation} obtained from \eqref{eq:ray}) connecting four vertices labeling the coarse-grained subsystems $A,BC,DE,FO$. With a few manipulations we then arrive at the hypergraph shown in \Cref{fig:hg}.

\begin{figure}[tb]
    \centering
    \begin{tikzpicture}
    \draw[fill=yellow,draw=none] (-0.2,0.4) .. controls (-1,2) and (1,2) .. (0.2,0.4) .. controls (0.1,0.2) and (0.2,0.1) .. (0.4,0.2) .. controls (2,1) and (2,-1) .. (0.4,-0.2) .. controls (0.2,-0.1) and (0.1,-0.2) .. (0.2,-0.4) .. controls (1,-2) and (-1,-2) .. (-0.2,-0.4) .. controls (-0.1,-0.2) and (-0.2,-0.1) .. (-0.4,-0.2) .. controls (-2,-1) and (-2,1) .. (-0.4,0.2) .. controls (-0.2,0.1) and (-0.1,0.2) .. (-0.2,0.4);
    \draw[very thick,blue]  (0,1) -- (-1,2) ;
    \draw[very thick,blue]  (0,1) -- (1,2);
    \draw[very thick,blue]  (1,0) -- (2,1);
    \draw[very thick,blue]  (1,0) -- (2.366,-0.366);
    \draw[very thick,blue]  (1,0) -- (1.366,-1.366);
    \draw[very thick,blue]  (0,-1) -- (1.366,-1.366);
    \draw[very thick,blue]  (0,-1) -- (-1.366,-1.366);
    \filldraw[red] (0,1) circle (2.5pt);
    \filldraw[red] (1,0) circle (2.5pt);
    \filldraw[red] (0,-1) circle (2.5pt);
    \filldraw (-1,2) circle (2.5pt); 
    \filldraw (1,2) circle (2.5pt); 
    \filldraw (-1,0) circle (2.5pt); 
    \filldraw (2,1) circle (2.5pt); 
    \filldraw (2.366,-0.366) circle (2.5pt); 
    \filldraw (1.366,-1.366) circle (2.5pt); 
    \filldraw (-1.366,-1.366) circle (2.5pt); 
    \draw[orange] (0,0) node[]{{\footnotesize $2$}};
    \draw[blue] (1.7,0) node[right]{{\footnotesize $2$}};
    \draw[] (-1.4,0) node{{\footnotesize $A$}};
    \draw[] (-1.3,2.3) node{{\footnotesize $B$}};
    \draw[] (1.3,2.3) node{{\footnotesize $C$}};
    \draw[] (2.3,1.3) node{{\footnotesize $D$}};
    \draw[] (2.7,-0.45) node{{\footnotesize $E$}};
    \draw[] (1.666,-1.666) node{{\footnotesize $F$}};
    \draw[] (-1.75,-1.55) node{{\footnotesize $O$}};
    \end{tikzpicture}
    \caption{The hypergraph model that realizes the entropy vector \eqref{eq:ray}. The (yellow) blob describes the only hyperedge in the model, which connects the boundary vertex $A$ and the three bulk vertices (red), and has weight 2. All other edges (blue) have weight 1, except for the $E$ leaf which has weight 2.}
    \label{fig:hg}
\end{figure}
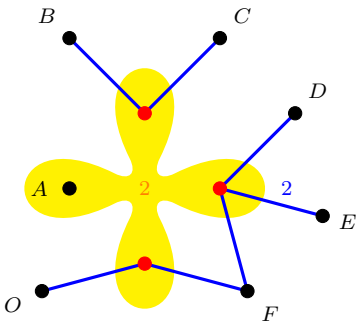

We are now ready to prove the main result of this letter.

\begin{thm}
    The extreme ray of the 6-party \emph{SAC} given in \eqref{eq:ray} is the entropy vector of a quantum state.
\end{thm}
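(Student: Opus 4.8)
The plan is to exhibit \eqref{eq:ray} as the entropy vector of the hypergraph model displayed in \Cref{fig:hg}, and then to invoke the general fact, established in \cite{Bao:2020zgx}, that every entropy vector realizable by a hypergraph model is also realizable by a (stabilizer, hence quantum) state. Because the latter is a theorem about \emph{all} hypergraph models, the entire content that remains is to check that the one specific hypergraph in \Cref{fig:hg} computes exactly the vector \eqref{eq:ray}; quantum-mechanical realizability then follows immediately, which is precisely the claim of the theorem. So the proof reduces to a min-cut verification.

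To carry out that verification I would first isolate the only genuine degrees of freedom. The seven boundary vertices $A,\dots,F,O$ have their sides fixed as soon as a subsystem $\I\subseteq[6]$ is chosen: each leaf $B,C,D,E,F$ lies on the $\I$-side iff it belongs to $\I$, the purifier leaf $O$ always lies on the complementary side, and $A$ lies on the $\I$-side iff $A\in\I$. The only internal vertices are the three bulk (red) vertices, which I will call $b_1,b_2,b_3$ (adjacent to $\{B,C\}$, to $\{D,E,F\}$, and to $\{F,O\}$ respectively). Hence for each of the $\D=2^6-1=63$ subsystems the entropy $\ent_{\I}$ is just the minimum, over the $2^3=8$ placements of $b_1,b_2,b_3$, of the total cut cost, where the cost of an assignment is the sum of the weights of the severed leaf-edges plus the weight $2$ of the hyperedge whenever $A,b_1,b_2,b_3$ are not all on the same side (the hyperedge cut rule reviewed above). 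The only couplings between the otherwise independent clusters are through this single hyperedge and through the vertex $F$, which is shared between $b_2$ and $b_3$; this is exactly the structure that lets one weight-$2$ hyperedge produce the MMI violation \eqref{eq:mmi_violation}.

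The verification is then a finite, mechanical computation: for each of the $63$ components one minimizes over the eight bulk assignments and checks agreement with \eqref{eq:ray}. The single-party entropies illustrate the bookkeeping and already constrain the weights—$\ent_A=2$ since the cheapest $A$-cut severs the weight-$2$ hyperedge, $\ent_E=2$ from the weight-$2$ leaf, $\ent_F=2$ from the two unit edges joining $F$ to $b_2$ and $b_3$, and $\ent_B=\ent_C=\ent_D=1$ from their single unit leaves—and the multi-party components follow by the same accounting. I do not anticipate a conceptual obstacle: the real work was identifying the correct hypergraph (already done) together with citing the quantum-realizability theorem of \cite{Bao:2020zgx}. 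The only ``hard part'' is organizational, namely tracking the eight cut configurations across all sixty-three subsystems, which is best discharged either by the symmetry-reduced hand computation sketched above or by a direct machine check against \eqref{eq:ray}.
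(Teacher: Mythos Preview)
Your proposal is correct and follows exactly the paper's own argument: verify by min-cut computation that the hypergraph of \Cref{fig:hg} reproduces \eqref{eq:ray}, then appeal to the theorem that hypergraph entropy vectors are realized by stabilizer states. One small correction: the quantum realizability of hypergraph models was proved in \cite{Walter:2020zvt}, not in \cite{Bao:2020zgx} (which introduced the models); the paper itself cites \cite{Walter:2020zvt} for this step.
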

\begin{proof}
    We leave it as a simple exercise for the reader to explicitly verify that the entropy vector of the hypergraph model shown in \Cref{fig:hg} is precisely \eqref{eq:ray}. The fact that \eqref{eq:ray} is the entropy vector of a quantum state then follows immediately from the result of \cite{Walter:2020zvt}, which showed that (similar to the case of a standard graph) any entropy vector realizable by a hypergraph model is the entropy vector of a quantum stabilizer state.
\end{proof}

\section{\label{sec:discussion} Discussion}

We conclude with a few comments about the implications of the failure of \Cref{conj:2} for the reconstruction and the physical interpretation of the HEC. If \Cref{conj:1} is true, the HEC can be fully reconstructed from the solution to the EHMIP. In that case, if \Cref{conj:2} were also true, the HEC would be the largest possible polyhedral cone compatible with this reconstruction procedure and quantum mechanics. One should then question the actual physical meaning of holographic entropy inequalities, since they would just follow from the bound on $\N$, which is artificial in QFT. The failure of \Cref{conj:2} proves that this is not the case, even if the reconstruction procedure of \Cref{conj:1} can indeed be achieved. In that case, one should then try to understand what distinguishes the extreme rays of the SAC that can be realized by graph models from the larger set of quantum mechanical ones. 

A first step in this direction is the derivation of all extreme rays which are compatible with SSA for $\N=6$ \cite{He:2022wip}. Since this is the smallest value of $\N$ where the solutions to the EQMIP and EHMIP differ, it is a useful testing ground to develop intuition. Next, one should determine which of these rays can be realized in quantum mechanics, and as demonstrated here, the hypergraph construction could be a useful tool in this direction. Ultimately, one may then attempt to use the construction suggested in \cite{Bao:2020zgx}, or the techniques from \cite{hayden2016holographic,Walter:2020zvt} to explicitly construct the corresponding quantum states and find a new characterization from the perspective of quantum information theory. We leave these questions for future investigation.

\vspace{3mm}

\noindent {\it Acknowledgments:} V.H and M.R. would like to thank the Centro de Ciencias de Benasque Pedro Pascual for hospitality during the program ``Gravity - New perspectives from strings and higher dimensions'', where this work was carried out. T.H. is supported by the Heising-Simons Foundation ``Observational Signatures of Quantum Gravity'' collaboration grant 2021-2817, the U.S. Department of Energy grant DE-SC0011632, and the Walter Burke Institute for Theoretical Physics. V.H. has been supported in part by the U.S. Department of Energy grant DE-SC0009999 and by funds from the University of California. M.R is supported by funds from the University of California.

\bibliography{apssamp}

\providecommand{\noopsort}[1]{}\providecommand{\singleletter}[1]{#1}%
\begin{thebibliography}{36}%
\makeatletter
\providecommand \@ifxundefined [1]{%
 \@ifx{#1\undefined}
}%
\providecommand \@ifnum [1]{%
 \ifnum #1\expandafter \@firstoftwo
 \else \expandafter \@secondoftwo
 \fi
}%
\providecommand \@ifx [1]{%
 \ifx #1\expandafter \@firstoftwo
 \else \expandafter \@secondoftwo
 \fi
}%
\providecommand \natexlab [1]{#1}%
\providecommand \enquote  [1]{``#1''}%
\providecommand \bibnamefont  [1]{#1}%
\providecommand \bibfnamefont [1]{#1}%
\providecommand \citenamefont [1]{#1}%
\providecommand \href@noop [0]{\@secondoftwo}%
\providecommand \href [0]{\begingroup \@sanitize@url \@href}%
\providecommand \@href[1]{\@@startlink{#1}\@@href}%
\providecommand \@@href[1]{\endgroup#1\@@endlink}%
\providecommand \@sanitize@url [0]{\catcode `\\12\catcode `\$12\catcode
  `\&12\catcode `\#12\catcode `\^12\catcode `\_12\catcode `\%12\relax}%
\providecommand \@@startlink[1]{}%
\providecommand \@@endlink[0]{}%
\providecommand \url  [0]{\begingroup\@sanitize@url \@url }%
\providecommand \@url [1]{\endgroup\@href {#1}{\urlprefix }}%
\providecommand \urlprefix  [0]{URL }%
\providecommand \Eprint [0]{\href }%
\providecommand \doibase [0]{http://dx.doi.org/}%
\providecommand \selectlanguage [0]{\@gobble}%
\providecommand \bibinfo  [0]{\@secondoftwo}%
\providecommand \bibfield  [0]{\@secondoftwo}%
\providecommand \translation [1]{[#1]}%
\providecommand \BibitemOpen [0]{}%
\providecommand \bibitemStop [0]{}%
\providecommand \bibitemNoStop [0]{.\EOS\space}%
\providecommand \EOS [0]{\spacefactor3000\relax}%
\providecommand \BibitemShut  [1]{\csname bibitem#1\endcsname}%
\let\auto@bib@innerbib\@empty
\bibitem [{\citenamefont {Hern\'andez-Cuenca}\ \emph
  {et~al.}(2022)\citenamefont {Hern\'andez-Cuenca}, \citenamefont {Hubeny},\
  and\ \citenamefont {Rota}}]{Hernandez-Cuenca:2022pst}%
  \BibitemOpen
  \bibfield  {author} {\bibinfo {author} {\bibfnamefont {Sergio}\ \bibnamefont
  {Hern\'andez-Cuenca}}, \bibinfo {author} {\bibfnamefont {Veronika~E.}\
  \bibnamefont {Hubeny}}, \ and\ \bibinfo {author} {\bibfnamefont
  {Massimiliano}\ \bibnamefont {Rota}},\ }\bibfield  {title} {\enquote
  {\bibinfo {title} {{The holographic entropy cone from marginal
  independence}},}\ }\href {\doibase 10.1007/JHEP09(2022)190} {\bibfield
  {journal} {\bibinfo  {journal} {JHEP}\ }\textbf {\bibinfo {volume} {09}},\
  \bibinfo {pages} {190} (\bibinfo {year} {2022})},\ \Eprint
  {http://arxiv.org/abs/2204.00075} {arXiv:2204.00075 [hep-th]} \BibitemShut
  {NoStop}%
\bibitem [{\citenamefont {Maldacena}(1998)}]{Maldacena:1997re}%
  \BibitemOpen
  \bibfield  {author} {\bibinfo {author} {\bibfnamefont {Juan~Martin}\
  \bibnamefont {Maldacena}},\ }\bibfield  {title} {\enquote {\bibinfo {title}
  {{The Large N limit of superconformal field theories and supergravity}},}\
  }\href {\doibase 10.1023/A:1026654312961} {\bibfield  {journal} {\bibinfo
  {journal} {Adv. Theor. Math. Phys.}\ }\textbf {\bibinfo {volume} {2}},\
  \bibinfo {pages} {231--252} (\bibinfo {year} {1998})},\ \Eprint
  {http://arxiv.org/abs/hep-th/9711200} {arXiv:hep-th/9711200} \BibitemShut
  {NoStop}%
\bibitem [{\citenamefont {Gubser}\ \emph {et~al.}(1998)\citenamefont {Gubser},
  \citenamefont {Klebanov},\ and\ \citenamefont {Polyakov}}]{Gubser:1998bc}%
  \BibitemOpen
  \bibfield  {author} {\bibinfo {author} {\bibfnamefont {S.~S.}\ \bibnamefont
  {Gubser}}, \bibinfo {author} {\bibfnamefont {Igor~R.}\ \bibnamefont
  {Klebanov}}, \ and\ \bibinfo {author} {\bibfnamefont {Alexander~M.}\
  \bibnamefont {Polyakov}},\ }\bibfield  {title} {\enquote {\bibinfo {title}
  {{Gauge theory correlators from noncritical string theory}},}\ }\href
  {\doibase 10.1016/S0370-2693(98)00377-3} {\bibfield  {journal} {\bibinfo
  {journal} {Phys. Lett. B}\ }\textbf {\bibinfo {volume} {428}},\ \bibinfo
  {pages} {105--114} (\bibinfo {year} {1998})},\ \Eprint
  {http://arxiv.org/abs/hep-th/9802109} {arXiv:hep-th/9802109} \BibitemShut
  {NoStop}%
\bibitem [{\citenamefont {Witten}(1998)}]{Witten:1998qj}%
  \BibitemOpen
  \bibfield  {author} {\bibinfo {author} {\bibfnamefont {Edward}\ \bibnamefont
  {Witten}},\ }\bibfield  {title} {\enquote {\bibinfo {title} {{Anti-de Sitter
  space and holography}},}\ }\href {\doibase 10.4310/ATMP.1998.v2.n2.a2}
  {\bibfield  {journal} {\bibinfo  {journal} {Adv. Theor. Math. Phys.}\
  }\textbf {\bibinfo {volume} {2}},\ \bibinfo {pages} {253--291} (\bibinfo
  {year} {1998})},\ \Eprint {http://arxiv.org/abs/hep-th/9802150}
  {arXiv:hep-th/9802150} \BibitemShut {NoStop}%
\bibitem [{\citenamefont {Hayden}\ \emph {et~al.}(2013)\citenamefont {Hayden},
  \citenamefont {Headrick},\ and\ \citenamefont {Maloney}}]{Hayden:2011ag}%
  \BibitemOpen
  \bibfield  {author} {\bibinfo {author} {\bibfnamefont {Patrick}\ \bibnamefont
  {Hayden}}, \bibinfo {author} {\bibfnamefont {Matthew}\ \bibnamefont
  {Headrick}}, \ and\ \bibinfo {author} {\bibfnamefont {Alexander}\
  \bibnamefont {Maloney}},\ }\bibfield  {title} {\enquote {\bibinfo {title}
  {{Holographic Mutual Information is Monogamous}},}\ }\href {\doibase
  10.1103/PhysRevD.87.046003} {\bibfield  {journal} {\bibinfo  {journal} {Phys.
  Rev.}\ }\textbf {\bibinfo {volume} {D87}},\ \bibinfo {pages} {046003}
  (\bibinfo {year} {2013})},\ \Eprint {http://arxiv.org/abs/1107.2940}
  {arXiv:1107.2940 [hep-th]} \BibitemShut {NoStop}%
\bibitem [{\citenamefont {Wall}(2014)}]{Wall:2012uf}%
  \BibitemOpen
  \bibfield  {author} {\bibinfo {author} {\bibfnamefont {Aron~C.}\ \bibnamefont
  {Wall}},\ }\bibfield  {title} {\enquote {\bibinfo {title} {{Maximin Surfaces,
  and the Strong Subadditivity of the Covariant Holographic Entanglement
  Entropy}},}\ }\href {\doibase 10.1088/0264-9381/31/22/225007} {\bibfield
  {journal} {\bibinfo  {journal} {Class. Quant. Grav.}\ }\textbf {\bibinfo
  {volume} {31}},\ \bibinfo {pages} {225007} (\bibinfo {year} {2014})},\
  \Eprint {http://arxiv.org/abs/1211.3494} {arXiv:1211.3494 [hep-th]}
  \BibitemShut {NoStop}%
\bibitem [{\citenamefont {Ryu}\ and\ \citenamefont
  {Takayanagi}(2006)}]{Ryu:2006bv}%
  \BibitemOpen
  \bibfield  {author} {\bibinfo {author} {\bibfnamefont {Shinsei}\ \bibnamefont
  {Ryu}}\ and\ \bibinfo {author} {\bibfnamefont {Tadashi}\ \bibnamefont
  {Takayanagi}},\ }\bibfield  {title} {\enquote {\bibinfo {title} {{Holographic
  derivation of entanglement entropy from AdS/CFT}},}\ }\href {\doibase
  10.1103/PhysRevLett.96.181602} {\bibfield  {journal} {\bibinfo  {journal}
  {Phys. Rev. Lett.}\ }\textbf {\bibinfo {volume} {96}},\ \bibinfo {pages}
  {181602} (\bibinfo {year} {2006})},\ \Eprint
  {http://arxiv.org/abs/hep-th/0603001} {arXiv:hep-th/0603001 [hep-th]}
  \BibitemShut {NoStop}%
\bibitem [{\citenamefont {Hubeny}\ \emph {et~al.}(2007)\citenamefont {Hubeny},
  \citenamefont {Rangamani},\ and\ \citenamefont {Takayanagi}}]{Hubeny:2007xt}%
  \BibitemOpen
  \bibfield  {author} {\bibinfo {author} {\bibfnamefont {Veronika~E.}\
  \bibnamefont {Hubeny}}, \bibinfo {author} {\bibfnamefont {Mukund}\
  \bibnamefont {Rangamani}}, \ and\ \bibinfo {author} {\bibfnamefont {Tadashi}\
  \bibnamefont {Takayanagi}},\ }\bibfield  {title} {\enquote {\bibinfo {title}
  {{A Covariant holographic entanglement entropy proposal}},}\ }\href {\doibase
  10.1088/1126-6708/2007/07/062} {\bibfield  {journal} {\bibinfo  {journal}
  {JHEP}\ }\textbf {\bibinfo {volume} {07}},\ \bibinfo {pages} {062} (\bibinfo
  {year} {2007})},\ \Eprint {http://arxiv.org/abs/0705.0016} {arXiv:0705.0016
  [hep-th]} \BibitemShut {NoStop}%
\bibitem [{\citenamefont {Bao}\ \emph {et~al.}(2015)\citenamefont {Bao},
  \citenamefont {Nezami}, \citenamefont {Ooguri}, \citenamefont {Stoica},
  \citenamefont {Sully},\ and\ \citenamefont {Walter}}]{Bao:2015bfa}%
  \BibitemOpen
  \bibfield  {author} {\bibinfo {author} {\bibfnamefont {Ning}\ \bibnamefont
  {Bao}}, \bibinfo {author} {\bibfnamefont {Sepehr}\ \bibnamefont {Nezami}},
  \bibinfo {author} {\bibfnamefont {Hirosi}\ \bibnamefont {Ooguri}}, \bibinfo
  {author} {\bibfnamefont {Bogdan}\ \bibnamefont {Stoica}}, \bibinfo {author}
  {\bibfnamefont {James}\ \bibnamefont {Sully}}, \ and\ \bibinfo {author}
  {\bibfnamefont {Michael}\ \bibnamefont {Walter}},\ }\bibfield  {title}
  {\enquote {\bibinfo {title} {{The Holographic Entropy Cone}},}\ }\href
  {\doibase 10.1007/JHEP09(2015)130} {\bibfield  {journal} {\bibinfo  {journal}
  {JHEP}\ }\textbf {\bibinfo {volume} {09}},\ \bibinfo {pages} {130} (\bibinfo
  {year} {2015})},\ \Eprint {http://arxiv.org/abs/1505.07839} {arXiv:1505.07839
  [hep-th]} \BibitemShut {NoStop}%
\bibitem [{\citenamefont {Marolf}\ \emph {et~al.}(2017)\citenamefont {Marolf},
  \citenamefont {Rota},\ and\ \citenamefont {Wien}}]{Marolf:2017shp}%
  \BibitemOpen
  \bibfield  {author} {\bibinfo {author} {\bibfnamefont {Donald}\ \bibnamefont
  {Marolf}}, \bibinfo {author} {\bibfnamefont {Massimiliano}\ \bibnamefont
  {Rota}}, \ and\ \bibinfo {author} {\bibfnamefont {Jason}\ \bibnamefont
  {Wien}},\ }\bibfield  {title} {\enquote {\bibinfo {title} {{Handlebody phases
  and the polyhedrality of the holographic entropy cone}},}\ }\href {\doibase
  10.1007/JHEP10(2017)069} {\bibfield  {journal} {\bibinfo  {journal} {JHEP}\
  }\textbf {\bibinfo {volume} {10}},\ \bibinfo {pages} {069} (\bibinfo {year}
  {2017})},\ \Eprint {http://arxiv.org/abs/1705.10736} {arXiv:1705.10736
  [hep-th]} \BibitemShut {NoStop}%
\bibitem [{\citenamefont {Rota}\ and\ \citenamefont
  {Weinberg}(2018)}]{Rota:2017ubr}%
  \BibitemOpen
  \bibfield  {author} {\bibinfo {author} {\bibfnamefont {Massimiliano}\
  \bibnamefont {Rota}}\ and\ \bibinfo {author} {\bibfnamefont {Sean~J.}\
  \bibnamefont {Weinberg}},\ }\bibfield  {title} {\enquote {\bibinfo {title}
  {{New constraints for holographic entropy from maximin: A no-go theorem}},}\
  }\href {\doibase 10.1103/PhysRevD.97.086013} {\bibfield  {journal} {\bibinfo
  {journal} {Phys. Rev. D}\ }\textbf {\bibinfo {volume} {97}},\ \bibinfo
  {pages} {086013} (\bibinfo {year} {2018})},\ \Eprint
  {http://arxiv.org/abs/1712.10004} {arXiv:1712.10004 [hep-th]} \BibitemShut
  {NoStop}%
\bibitem [{\citenamefont {Cui}\ \emph {et~al.}(2019)\citenamefont {Cui},
  \citenamefont {Hayden}, \citenamefont {He}, \citenamefont {Headrick},
  \citenamefont {Stoica},\ and\ \citenamefont {Walter}}]{Cui:2018dyq}%
  \BibitemOpen
  \bibfield  {author} {\bibinfo {author} {\bibfnamefont {Shawn~X.}\
  \bibnamefont {Cui}}, \bibinfo {author} {\bibfnamefont {Patrick}\ \bibnamefont
  {Hayden}}, \bibinfo {author} {\bibfnamefont {Temple}\ \bibnamefont {He}},
  \bibinfo {author} {\bibfnamefont {Matthew}\ \bibnamefont {Headrick}},
  \bibinfo {author} {\bibfnamefont {Bogdan}\ \bibnamefont {Stoica}}, \ and\
  \bibinfo {author} {\bibfnamefont {Michael}\ \bibnamefont {Walter}},\
  }\bibfield  {title} {\enquote {\bibinfo {title} {{Bit Threads and Holographic
  Monogamy}},}\ }\href {\doibase 10.1007/s00220-019-03510-8} {\bibfield
  {journal} {\bibinfo  {journal} {Commun. Math. Phys.}\ }\textbf {\bibinfo
  {volume} {376}},\ \bibinfo {pages} {609--648} (\bibinfo {year} {2019})},\
  \Eprint {http://arxiv.org/abs/1808.05234} {arXiv:1808.05234 [hep-th]}
  \BibitemShut {NoStop}%
\bibitem [{\citenamefont {Hubeny}\ \emph {et~al.}(2018)\citenamefont {Hubeny},
  \citenamefont {Rangamani},\ and\ \citenamefont {Rota}}]{Hubeny:2018trv}%
  \BibitemOpen
  \bibfield  {author} {\bibinfo {author} {\bibfnamefont {Veronika~E.}\
  \bibnamefont {Hubeny}}, \bibinfo {author} {\bibfnamefont {Mukund}\
  \bibnamefont {Rangamani}}, \ and\ \bibinfo {author} {\bibfnamefont
  {Massimiliano}\ \bibnamefont {Rota}},\ }\bibfield  {title} {\enquote
  {\bibinfo {title} {{Holographic entropy relations}},}\ }\href {\doibase
  10.1002/prop.201800067} {\bibfield  {journal} {\bibinfo  {journal} {Fortsch.
  Phys.}\ }\textbf {\bibinfo {volume} {66}},\ \bibinfo {pages} {1800067}
  (\bibinfo {year} {2018})},\ \Eprint {http://arxiv.org/abs/1808.07871}
  {arXiv:1808.07871 [hep-th]} \BibitemShut {NoStop}%
\bibitem [{\citenamefont {Bao}\ and\ \citenamefont
  {Mezei}(2018)}]{Bao:2018wwd}%
  \BibitemOpen
  \bibfield  {author} {\bibinfo {author} {\bibfnamefont {Ning}\ \bibnamefont
  {Bao}}\ and\ \bibinfo {author} {\bibfnamefont {M\'ark}\ \bibnamefont
  {Mezei}},\ }\bibfield  {title} {\enquote {\bibinfo {title} {{On the Entropy
  Cone for Large Regions at Late Times}},}\ }\href@noop {} {\  (\bibinfo {year}
  {2018})},\ \Eprint {http://arxiv.org/abs/1811.00019} {arXiv:1811.00019
  [hep-th]} \BibitemShut {NoStop}%
\bibitem [{\citenamefont {Hubeny}\ \emph {et~al.}(2019)\citenamefont {Hubeny},
  \citenamefont {Rangamani},\ and\ \citenamefont {Rota}}]{Hubeny:2018ijt}%
  \BibitemOpen
  \bibfield  {author} {\bibinfo {author} {\bibfnamefont {Veronika~E.}\
  \bibnamefont {Hubeny}}, \bibinfo {author} {\bibfnamefont {Mukund}\
  \bibnamefont {Rangamani}}, \ and\ \bibinfo {author} {\bibfnamefont
  {Massimiliano}\ \bibnamefont {Rota}},\ }\bibfield  {title} {\enquote
  {\bibinfo {title} {{The holographic entropy arrangement}},}\ }\href {\doibase
  10.1002/prop.201900011} {\bibfield  {journal} {\bibinfo  {journal} {Fortsch.
  Phys.}\ }\textbf {\bibinfo {volume} {67}},\ \bibinfo {pages} {1900011}
  (\bibinfo {year} {2019})},\ \Eprint {http://arxiv.org/abs/1812.08133}
  {arXiv:1812.08133 [hep-th]} \BibitemShut {NoStop}%
\bibitem [{\citenamefont {Hern\'andez~Cuenca}(2019)}]{Cuenca:2019uzx}%
  \BibitemOpen
  \bibfield  {author} {\bibinfo {author} {\bibfnamefont {Sergio}\ \bibnamefont
  {Hern\'andez~Cuenca}},\ }\bibfield  {title} {\enquote {\bibinfo {title}
  {{Holographic entropy cone for five regions}},}\ }\href {\doibase
  10.1103/PhysRevD.100.026004} {\bibfield  {journal} {\bibinfo  {journal}
  {Phys. Rev. D}\ }\textbf {\bibinfo {volume} {100}},\ \bibinfo {pages}
  {026004} (\bibinfo {year} {2019})},\ \Eprint
  {http://arxiv.org/abs/1903.09148} {arXiv:1903.09148 [hep-th]} \BibitemShut
  {NoStop}%
\bibitem [{\citenamefont {Czech}\ and\ \citenamefont
  {Dong}(2019)}]{Czech:2019lps}%
  \BibitemOpen
  \bibfield  {author} {\bibinfo {author} {\bibfnamefont {Bartlomiej}\
  \bibnamefont {Czech}}\ and\ \bibinfo {author} {\bibfnamefont
  {Xi}~\bibnamefont {Dong}},\ }\bibfield  {title} {\enquote {\bibinfo {title}
  {{Holographic Entropy Cone with Time Dependence in Two Dimensions}},}\ }\href
  {\doibase 10.1007/JHEP10(2019)177} {\bibfield  {journal} {\bibinfo  {journal}
  {JHEP}\ }\textbf {\bibinfo {volume} {10}},\ \bibinfo {pages} {177} (\bibinfo
  {year} {2019})},\ \Eprint {http://arxiv.org/abs/1905.03787} {arXiv:1905.03787
  [hep-th]} \BibitemShut {NoStop}%
\bibitem [{\citenamefont {He}\ \emph {et~al.}(2019)\citenamefont {He},
  \citenamefont {Headrick},\ and\ \citenamefont {Hubeny}}]{He:2019ttu}%
  \BibitemOpen
  \bibfield  {author} {\bibinfo {author} {\bibfnamefont {Temple}\ \bibnamefont
  {He}}, \bibinfo {author} {\bibfnamefont {Matthew}\ \bibnamefont {Headrick}},
  \ and\ \bibinfo {author} {\bibfnamefont {Veronika~E.}\ \bibnamefont
  {Hubeny}},\ }\bibfield  {title} {\enquote {\bibinfo {title} {{Holographic
  Entropy Relations Repackaged}},}\ }\href {\doibase 10.1007/JHEP10(2019)118}
  {\bibfield  {journal} {\bibinfo  {journal} {JHEP}\ }\textbf {\bibinfo
  {volume} {10}},\ \bibinfo {pages} {118} (\bibinfo {year} {2019})},\ \Eprint
  {http://arxiv.org/abs/1905.06985} {arXiv:1905.06985 [hep-th]} \BibitemShut
  {NoStop}%
\bibitem [{\citenamefont {He}\ \emph {et~al.}(2020)\citenamefont {He},
  \citenamefont {Hubeny},\ and\ \citenamefont {Rangamani}}]{He:2020xuo}%
  \BibitemOpen
  \bibfield  {author} {\bibinfo {author} {\bibfnamefont {Temple}\ \bibnamefont
  {He}}, \bibinfo {author} {\bibfnamefont {Veronika~E.}\ \bibnamefont
  {Hubeny}}, \ and\ \bibinfo {author} {\bibfnamefont {Mukund}\ \bibnamefont
  {Rangamani}},\ }\bibfield  {title} {\enquote {\bibinfo {title} {{Superbalance
  of Holographic Entropy Inequalities}},}\ }\href {\doibase
  10.1007/JHEP07(2020)245} {\bibfield  {journal} {\bibinfo  {journal} {JHEP}\
  }\textbf {\bibinfo {volume} {07}},\ \bibinfo {pages} {245} (\bibinfo {year}
  {2020})},\ \Eprint {http://arxiv.org/abs/2002.04558} {arXiv:2002.04558
  [hep-th]} \BibitemShut {NoStop}%
\bibitem [{\citenamefont {Avis}\ and\ \citenamefont
  {Hern\'andez-Cuenca}(2023)}]{Avis:2021xnz}%
  \BibitemOpen
  \bibfield  {author} {\bibinfo {author} {\bibfnamefont {David}\ \bibnamefont
  {Avis}}\ and\ \bibinfo {author} {\bibfnamefont {Sergio}\ \bibnamefont
  {Hern\'andez-Cuenca}},\ }\bibfield  {title} {\enquote {\bibinfo {title} {{On
  the foundations and extremal structure of the holographic entropy cone}},}\
  }\href {\doibase 10.1016/j.dam.2022.11.016} {\bibfield  {journal} {\bibinfo
  {journal} {Applications Math.}\ }\textbf {\bibinfo {volume} {328}},\ \bibinfo
  {pages} {16} (\bibinfo {year} {2023})},\ \Eprint
  {http://arxiv.org/abs/2102.07535} {arXiv:2102.07535 [math.CO]} \BibitemShut
  {NoStop}%
\bibitem [{\citenamefont {Czech}\ and\ \citenamefont
  {Wang}(2023)}]{Czech:2022fzb}%
  \BibitemOpen
  \bibfield  {author} {\bibinfo {author} {\bibfnamefont {Bartlomiej}\
  \bibnamefont {Czech}}\ and\ \bibinfo {author} {\bibfnamefont {Yunfei}\
  \bibnamefont {Wang}},\ }\bibfield  {title} {\enquote {\bibinfo {title} {{A
  holographic inequality for N = 7 regions}},}\ }\href {\doibase
  10.1007/JHEP01(2023)101} {\bibfield  {journal} {\bibinfo  {journal} {JHEP}\
  }\textbf {\bibinfo {volume} {01}},\ \bibinfo {pages} {101} (\bibinfo {year}
  {2023})},\ \Eprint {http://arxiv.org/abs/2209.10547} {arXiv:2209.10547
  [hep-th]} \BibitemShut {NoStop}%
\bibitem [{\citenamefont {Pippenger}(2003)}]{1193790}%
  \BibitemOpen
  \bibfield  {author} {\bibinfo {author} {\bibfnamefont {N.}~\bibnamefont
  {Pippenger}},\ }\bibfield  {title} {\enquote {\bibinfo {title} {The
  inequalities of quantum information theory},}\ }\href {\doibase
  10.1109/TIT.2003.809569} {\bibfield  {journal} {\bibinfo  {journal} {IEEE
  Transactions on Information Theory}\ }\textbf {\bibinfo {volume} {49}},\
  \bibinfo {pages} {773--789} (\bibinfo {year} {2003})}\BibitemShut {NoStop}%
\bibitem [{\citenamefont {Hern\'andez-Cuenca}\ \emph
  {et~al.}(2019)\citenamefont {Hern\'andez-Cuenca}, \citenamefont {Hubeny},
  \citenamefont {Rangamani},\ and\ \citenamefont
  {Rota}}]{Hernandez-Cuenca:2019jpv}%
  \BibitemOpen
  \bibfield  {author} {\bibinfo {author} {\bibfnamefont {Sergio}\ \bibnamefont
  {Hern\'andez-Cuenca}}, \bibinfo {author} {\bibfnamefont {Veronika~E.}\
  \bibnamefont {Hubeny}}, \bibinfo {author} {\bibfnamefont {Mukund}\
  \bibnamefont {Rangamani}}, \ and\ \bibinfo {author} {\bibfnamefont
  {Massimiliano}\ \bibnamefont {Rota}},\ }\bibfield  {title} {\enquote
  {\bibinfo {title} {{The quantum marginal independence problem}},}\
  }\href@noop {} {\  (\bibinfo {year} {2019})},\ \Eprint
  {http://arxiv.org/abs/1912.01041} {arXiv:1912.01041 [quant-ph]} \BibitemShut
  {NoStop}%
\bibitem [{Note1()}]{Note1}%
  \BibitemOpen
  \bibinfo {note} {This is the definition of a PMI given in \cite
  {Hernandez-Cuenca:2022pst}, whereas the original definition from \cite
  {Hernandez-Cuenca:2019jpv} was weaker.}\BibitemShut {Stop}%
\bibitem [{Note2()}]{Note2}%
  \BibitemOpen
  \bibinfo {note} {Notice that this implies that to reconstruct the extreme
  rays of the ${\protect \sf N}$-party HEC, one needs to know the extreme rays
  of the SAC for some ${\protect \sf N}'\geq {\protect \sf N}$. We will return
  to this point in the next section.}\BibitemShut {Stop}%
\bibitem [{Note3()}]{Note3}%
  \BibitemOpen
  \bibinfo {note} {Since no efficient algorithm is known, such explicit
  derivation of the inequalities remains computationally unfeasible for large
  ${\protect \sf N}$.}\BibitemShut {Stop}%
\bibitem [{\citenamefont {He}\ \emph {et~al.}(2022)\citenamefont {He},
  \citenamefont {Hubeny},\ and\ \citenamefont {Rota}}]{He:2022bmi}%
  \BibitemOpen
  \bibfield  {author} {\bibinfo {author} {\bibfnamefont {Temple}\ \bibnamefont
  {He}}, \bibinfo {author} {\bibfnamefont {Veronika~E.}\ \bibnamefont
  {Hubeny}}, \ and\ \bibinfo {author} {\bibfnamefont {Massimiliano}\
  \bibnamefont {Rota}},\ }\bibfield  {title} {\enquote {\bibinfo {title} {{On
  the relation between the subadditivity cone and the quantum entropy cone}},}\
  }\href@noop {} {\  (\bibinfo {year} {2022})},\ \Eprint
  {http://arxiv.org/abs/2211.11858} {arXiv:2211.11858 [hep-th]} \BibitemShut
  {NoStop}%
\bibitem [{\citenamefont {Hayden}\ \emph {et~al.}(2016)\citenamefont {Hayden},
  \citenamefont {Nezami}, \citenamefont {Qi}, \citenamefont {Thomas},
  \citenamefont {Walter},\ and\ \citenamefont {Yang}}]{hayden2016holographic}%
  \BibitemOpen
  \bibfield  {author} {\bibinfo {author} {\bibfnamefont {Patrick}\ \bibnamefont
  {Hayden}}, \bibinfo {author} {\bibfnamefont {Sepehr}\ \bibnamefont {Nezami}},
  \bibinfo {author} {\bibfnamefont {Xiao-Liang}\ \bibnamefont {Qi}}, \bibinfo
  {author} {\bibfnamefont {Nathaniel}\ \bibnamefont {Thomas}}, \bibinfo
  {author} {\bibfnamefont {Michael}\ \bibnamefont {Walter}}, \ and\ \bibinfo
  {author} {\bibfnamefont {Zhao}\ \bibnamefont {Yang}},\ }\bibfield  {title}
  {\enquote {\bibinfo {title} {{Holographic duality from random tensor
  networks}},}\ }\href {\doibase 10.1007/JHEP11(2016)009} {\bibfield  {journal}
  {\bibinfo  {journal} {JHEP}\ }\textbf {\bibinfo {volume} {11}},\ \bibinfo
  {pages} {009} (\bibinfo {year} {2016})},\ \Eprint
  {http://arxiv.org/abs/1601.01694} {arXiv:1601.01694 [hep-th]} \BibitemShut
  {NoStop}%
\bibitem [{Note4()}]{Note4}%
  \BibitemOpen
  \bibinfo {note} {This is not a priori obvious because the original definition
  of the HEC in \cite {Bao:2015bfa} is a purely geometric one, and it does not
  assume that the bulk geometry corresponds to a CFT state \cite
  {Marolf:2017shp}.}\BibitemShut {Stop}%
\bibitem [{Note5()}]{Note5}%
  \BibitemOpen
  \bibinfo {note} {Most of the machinery developed in \cite
  {Hernandez-Cuenca:2022pst} was used to show that \protect \Cref {conj:1}
  would follow from other purely graph theoretic conjectures, but this
  machinery is not necessary for the purpose of this letter.}\BibitemShut
  {Stop}%
\bibitem [{Note6()}]{Note6}%
  \BibitemOpen
  \bibinfo {note} {The precise expression of $\Lambda _{{\protect \sf
  N}'\rightarrow {\protect \sf N}}$ is not necessary for this discussion, but
  for clarity we give a simple example of such a projection. Consider the
  3-party entropy vector $\protect \vec {\protect \sf S}=(1,1,1;2,2,2;1)$ for
  $A,B,C$ (the entropies are ordered conventionally as exemplified in \protect
  \textup {\hbox {\mathsurround \z@ \protect \normalfont (\ignorespaces \ref
  {eq:lex}\unskip \@@italiccorr )}} for ${\protect \sf N}=6$), and define the
  new parties $A'=A,\protect \,B'=BC$. The 2-party entropy vector for these
  coarse-grained parties is then $\protect \vec {\protect \sf S}'=(1,2;1)$, and
  it is obtained from $\protect \vec {\protect \sf S}$ by simply dropping the
  components that separate $B$ from $C$, i.e., $B,C,AB,AC$.}\BibitemShut
  {Stop}%
\bibitem [{\citenamefont {Bruns}\ \emph {et~al.}()\citenamefont {Bruns},
  \citenamefont {Ichim}, \citenamefont {S{\"o}ger},\ and\ \citenamefont
  {von~der Ohe}}]{Normaliz:301}%
  \BibitemOpen
  \bibfield  {author} {\bibinfo {author} {\bibfnamefont {W.}~\bibnamefont
  {Bruns}}, \bibinfo {author} {\bibfnamefont {B.}~\bibnamefont {Ichim}},
  \bibinfo {author} {\bibfnamefont {C.}~\bibnamefont {S{\"o}ger}}, \ and\
  \bibinfo {author} {\bibfnamefont {U.}~\bibnamefont {von~der Ohe}},\
  }\href@noop {} {\enquote {\bibinfo {title} {{Normaliz. Algorithms for
  rational cones and affine monoids}},}\ }\bibinfo {howpublished} {{Available
  at \texttt{https://www.normaliz.uni-osnabrueck.de}}}\BibitemShut {NoStop}%
\bibitem [{Note7()}]{Note7}%
  \BibitemOpen
  \bibinfo {note} {The computation of all the extreme rays of the SAC for
  ${\protect \sf N}=5$ takes several days on a standard laptop, but the extreme
  rays of the face identified by \protect \Cref {thm:bp_theorem} only takes a
  few minutes.}\BibitemShut {Stop}%
\bibitem [{\citenamefont {He}\ \emph {et~al.}()\citenamefont {He},
  \citenamefont {Hubeny},\ and\ \citenamefont {Rota}}]{He:2022wip}%
  \BibitemOpen
  \bibfield  {author} {\bibinfo {author} {\bibfnamefont {Temple}\ \bibnamefont
  {He}}, \bibinfo {author} {\bibfnamefont {Veronika}\ \bibnamefont {Hubeny}}, \
  and\ \bibinfo {author} {\bibfnamefont {Massimiliano}\ \bibnamefont {Rota}},\
  }\bibfield  {title} {\enquote {\bibinfo {title} {{Algorithmic construction of
  SSA-compatible extreme rays of the subadditvity cone and the $\N=6$
  solution}},}\ }\href@noop {} {\ }\bibinfo {note} {In preparation}\BibitemShut
  {NoStop}%
\bibitem [{\citenamefont {Bao}\ \emph {et~al.}(2020)\citenamefont {Bao},
  \citenamefont {Cheng}, \citenamefont {Hern\'andez-Cuenca},\ and\
  \citenamefont {Su}}]{Bao:2020zgx}%
  \BibitemOpen
  \bibfield  {author} {\bibinfo {author} {\bibfnamefont {Ning}\ \bibnamefont
  {Bao}}, \bibinfo {author} {\bibfnamefont {Newton}\ \bibnamefont {Cheng}},
  \bibinfo {author} {\bibfnamefont {Sergio}\ \bibnamefont
  {Hern\'andez-Cuenca}}, \ and\ \bibinfo {author} {\bibfnamefont {Vincent~P.}\
  \bibnamefont {Su}},\ }\bibfield  {title} {\enquote {\bibinfo {title} {{The
  Quantum Entropy Cone of Hypergraphs}},}\ }\href {\doibase
  10.21468/SciPostPhys.9.5.067} {\bibfield  {journal} {\bibinfo  {journal}
  {SciPost Phys.}\ }\textbf {\bibinfo {volume} {9}},\ \bibinfo {pages} {5}
  (\bibinfo {year} {2020})},\ \Eprint {http://arxiv.org/abs/2002.05317}
  {arXiv:2002.05317 [quant-ph]} \BibitemShut {NoStop}%
\bibitem [{\citenamefont {Walter}\ and\ \citenamefont
  {Witteveen}(2021)}]{Walter:2020zvt}%
  \BibitemOpen
  \bibfield  {author} {\bibinfo {author} {\bibfnamefont {Michael}\ \bibnamefont
  {Walter}}\ and\ \bibinfo {author} {\bibfnamefont {Freek}\ \bibnamefont
  {Witteveen}},\ }\bibfield  {title} {\enquote {\bibinfo {title} {{Hypergraph
  min-cuts from quantum entropies}},}\ }\href {\doibase 10.1063/5.0043993}
  {\bibfield  {journal} {\bibinfo  {journal} {J. Math. Phys.}\ }\textbf
  {\bibinfo {volume} {62}},\ \bibinfo {pages} {092203} (\bibinfo {year}
  {2021})},\ \Eprint {http://arxiv.org/abs/2002.12397} {arXiv:2002.12397
  [quant-ph]} \BibitemShut {NoStop}%
\end{thebibliography}%

\end{document}